\documentclass[12pt]{article}

\usepackage[utf8]{inputenc}
\usepackage[T1]{fontenc}
\usepackage{mathpazo}
\usepackage[final,expansion=false]{microtype}

\usepackage[letterpaper, margin=1in]{geometry}
\usepackage{setspace}
\usepackage{amsmath, amssymb, amsfonts, amsthm}
\usepackage{mathtools}
\usepackage{bm}

\theoremstyle{plain}

\newtheorem{proposition}{Proposition}

\theoremstyle{definition}

\newcommand{\one}{\mathbf{1}}

\newcommand{\R}{\mathbb{R}}
\newcommand{\Sc}{\mathcal{S}}
\newcommand{\Hc}{\mathcal{H}}
\newcommand{\clr}{\operatorname{clr}}
\newcommand{\Cl}{\mathcal{C}}

\usepackage{graphicx}
\usepackage[font={footnotesize},justification=justified]{caption}
\usepackage{booktabs}
\usepackage{adjustbox}
\usepackage{threeparttable}

\usepackage{enumitem}

\usepackage[svgnames]{xcolor}
\usepackage[colorlinks=true, linkcolor=black, citecolor=black,
            urlcolor=blue]{hyperref}
\usepackage{natbib}
\usepackage[capitalize, noabbrev, nameinlink]{cleveref}
\crefname{assumption}{Assumption}{Assumptions}

\usepackage{tikz}
\usetikzlibrary{calc, positioning}
\usepackage{pgfplots}
\pgfplotsset{compat=1.18}
\usepgfplotslibrary{ternary}

\title{\large\bf Compositional Synthetic Controls}
\author{Onil Boussim\thanks{Department of Economics, California Polytechnic State University. Email: \texttt{oboussim@calpoly.edu}. I thank Marc Henry and participants of the Penn State econometrics study group.}}
\date{}

\begin{document}
\maketitle
\vspace{-1.5em}

\begin{abstract}
\noindent

When applying synthetic control to compositional outcomes (budget, vote shares), researchers commonly minimize Euclidean distances between raw shares. I propose constructing synthetic controls in Aitchison geometry, using centered log-ratio coordinates to select donor weights and form counterfactuals. This approach respects proportional comparisons and is exactly invariant to common multiplicative changes in relative odds. Under a multinomial-choice model, its weights summarize similarity in relative utility indices. Monte Carlo simulations, including a CES allocation model, show that this method performs better when relative incentives determine shares. An application to U.S. school finance demonstrates that this choice can alter reported inference.

\vskip 6pt
\noindent\textit{Keywords}: synthetic control; compositional data; Aitchison geometry; simplex.
\vskip 3pt
\noindent\textit{JEL}: C21; C23; H75; I22.
\end{abstract}

\section{Introduction}

Compositional outcomes are important in applied economics: school revenue by source, electricity generation by source, value added by sector, and vote shares. When evaluating policies that affect these outcomes using the synthetic control method \citep{abadie2003economic,abadie2010synthetic}, researchers commonly apply the method separately to each component or match raw shares using Euclidean distance. Recent work shows that imposing common donor weights across components ensures that estimated effects sum to zero and can improve efficiency by exploiting shared latent factors \citep{bogatyrev2026estimating,tian2023synthetic,sun2025using}.

However, this literature leaves open a central question: \textit{in which coordinates should the synthetic control be constructed?} Should researchers match donors on raw shares using Euclidean distance, or on log-ratios using the geometry of the simplex \citep{aitchison1982statistical}? This is not a tuning choice. The metric determines both the donor weights, which minimize pre-treatment discrepancies, and the counterfactual, which is the Fr\'echet mean under that same metric.

This paper shows that clr SCM is the theoretically more coherent and empirically preferred specification for intrinsically compositional outcomes. First, Euclidean fitting compares absolute share differences rather than proportional discrepancies; clr fitting instead respects the geometry of relative allocation. Second, clr SCM is invariant to common multiplicative perturbations. Such perturbations arise naturally when aggregate shocks shift relative odds, as in multinomial-choice models; raw-share SCM has no corresponding general invariance. Under that model, clr weights also summarize similarity in relative utility indices, giving donor selection a behavioral interpretation. Third, simulations and a school-finance application show that these theoretical differences have practical consequences. clr SCM performs better under a utility-based CES allocation model and under the clr-factor model implied by relative-choice behavior. Raw-share SCM performs better only when shares are generated by an additive raw-share factor process, which is generally not closed on the simplex, can imply negative shares without ad hoc restrictions, and lacks a natural foundation in models of relative choice or allocation. Thus, although an analyst may prefer raw-share loss, clr SCM is the appropriate default for coherent economic models of compositional outcomes.

The remainder of this paper proceeds as follows. Section \ref{sec:setup} formalizes the two methods. Section \ref{sec:why} details the theoretical advantages of the Aitchison approach. Section \ref{sec:sim} presents simulation evidence, and Section \ref{sec:app} applies the methods to U.S.\ school finance data. Section \ref{sec:concl} concludes.

\section{The two estimators}\label{sec:setup}
Unit $j\in\{1,\dots,J\}$ is observed for $t=1,\dots,T$ with composition $\pi_{j,t}$ in the
open simplex $\Sc^{p-1}=\{v\in\R^p_{>0}:\one'v=1\}$; unit $1$ is treated after $T_0$; and
$\Lambda=\{w\in\R^{J-1}:w_j\ge0,\ \sum_{j\ge2}w_j=1\}$. Write $\Cl(x)=x/\one'x$ and let
\[
\delta\oplus\pi=\Cl(\delta_1\pi_1,\dots,\delta_p\pi_p),\qquad \delta\in\R^p_{>0},
\]
denote perturbation, the group operation of the simplex. The map
$\clr(\pi)=\bigl(\log\pi_k-p^{-1}\sum_\ell\log\pi_\ell\bigr)_{k\le p}$ is a group isomorphism
onto $(\Hc,+)$ with $\Hc=\{z\in\R^p:\one'z=0\}$, satisfies
$\clr(\delta\oplus\pi)=\clr(\delta)+\clr(\pi)$, and induces the Aitchison distance
$\delta_A(\pi,\varpi)=\|\clr(\pi)-\clr(\varpi)\|_2$. A synthetic control selects weights by matching a pre-treatment path and then averages the donors. A distance determines both, since the average is its Fr\'echet mean
$\arg\min_m\sum_j w_j d^2(\pi_j,m)$.

Under Euclidean distance on shares the Fr\'echet mean is the weighted arithmetic mean
$M_w(\pi)=\sum_j w_j\pi_j$. Under $\delta_A$ it is the closed weighted geometric mean, or
Aitchison barycenter, $B_w(\pi)=\clr^{-1}\bigl(\sum_j w_j\clr(\pi_j)\bigr)$. The two estimators are therefore
\begin{align}
\hat w^{\mathrm{E}}&=\arg\min_{w\in\Lambda}\sum_{t\le T_0}
\Bigl\|\pi_{1,t}-\sum_{j\ge2}w_j\pi_{j,t}\Bigr\|_2^2,
\qquad \hat\pi^{0,\mathrm{E}}_{1,t}=M_{\hat w}(\pi_{\cdot,t}),
\label{eq:sh}\\[2pt]
\hat w^{\mathrm{A}}&=\arg\min_{w\in\Lambda}\sum_{t\le T_0}
\delta_A^2\bigl(\pi_{1,t},\,B_w(\pi_{\cdot,t})\bigr),
\qquad \hat\pi^{0,\mathrm{A}}_{1,t}=B_{\hat w}(\pi_{\cdot,t}).
\label{eq:clr}
\end{align}
Both deliver effects summing to zero, and \eqref{eq:clr} costs nothing to implement: it is
\eqref{eq:sh} run on the $p$ stacked clr series with fitted values mapped back by
$\clr^{-1}$.

\section{Why Aitchison geometry is appropriate for compositional outcomes}\label{sec:why}

\paragraph{Information lies in ratios.}
Shares lie on the simplex: their economic content is relative allocation, not independent arithmetic levels. Consider three regions allocating GDP across services, industry, and agriculture. Region A allocates $(82.5\%,15.0\%,2.5\%)$, Region B $(72.5\%,25.0\%,2.5\%)$, and Region C $(92.5\%,5.0\%,2.5\%)$. Euclidean distance places B and C at the same distance from A (about $0.14$). Yet the services-to-industry ratio is $5.5$ in A, $2.9$ in B, and $18.5$ in C. Since ratios describe the relevant economic structure, B should be the closer comparator. 

\paragraph{Latent factor models.} Raw-share SCM is naturally associated with an additive factor model,
$\pi_{k,j,t}=\alpha_{k,t}+\lambda_{k,t}'\mu_j+\epsilon_{k,j,t}$. This model is not closed on
the simplex; for many parameter draws it predicts negative shares, making it theoretically
incoherent for compositional data. By contrast, the Aitchison method corresponds to a factor
model in clr space: $\clr(\pi_{j,t})=d_t+\Lambda_t\mu_j+e_{j,t}$. Because the inverse clr
transformation maps the entire hyperplane $\Hc=\{z\in\R^p:\one'z=0\}$ back into the open
simplex, predicted compositions are strictly positive and sum to one by construction,
avoiding boundary violations. Negative fitted shares arise only when raw-share SCM is extended to allow extrapolation, intercepts, or unconstrained regression weights.

\paragraph{Behavioral meaning of clr weights.} Suppose a composition records aggregate choices among $p$ mutually exclusive sources and follows a multinomial-choice rule,
\[
\pi_{k,j,t}=\frac{\exp(V_{k,j,t})}{\sum_{\ell=1}^{p}\exp(V_{\ell,j,t})},
\]
where $V_{k,j,t}$ is the systematic utility or attractiveness index of source $k$ for unit $j$ at time $t$ \citep{mcfadden1974measurement}. Then $\clr(\pi_{j,t})=HV_{j,t}$, where $H=I-p^{-1}\one\one'$ removes the arbitrary common level of utility. Consequently, clr SCM selects donor weights that minimize discrepancies in \emph{relative} utility indices. The weights therefore have a preference-similarity interpretation under this model, not merely a statistical matching interpretation. For revenue shares, ``preference'' should be read broadly as the relative attractiveness of financing sources to the fiscal system, including political preferences, statutory rules, and source-specific fiscal costs. 

\paragraph{Common changes in relative odds.}
In a multinomial-logit interpretation, a common utility change shifts relative odds multiplicatively: $\pi_{j,t}=\delta_t\oplus\tilde\pi_{j,t}$. The following result makes the implication for SCM explicit.

\begin{proposition}\label{prop:cancel}
Let every unit in period $t$ be perturbed by the same $\delta_t\in\Sc^{p-1}$. Then clr SCM has identical pre-treatment objective values, minimizing weights, and clr treatment effects whether it is applied to $\{\pi_{j,t}\}$ or $\{\delta_t\oplus\pi_{j,t}\}$. Raw-share Euclidean SCM is not invariant to this transformation in general.
\end{proposition}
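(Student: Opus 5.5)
The plan is to push everything into clr coordinates and use the fact stated in \cref{sec:setup} that $\clr$ is a group isomorphism, so $\clr(\delta_t\oplus\pi_{j,t})=\clr(\delta_t)+\clr(\pi_{j,t})$. Write $z_{j,t}=\clr(\pi_{j,t})$ and $c_t=\clr(\delta_t)\in\Hc$. The first step is to show that the Aitchison barycenter is equivariant under perturbation. Since $w\in\Lambda$ has $\sum_{j\ge2}w_j=1$,
\[
\clr\bigl(B_w(\delta_t\oplus\pi_{\cdot,t})\bigr)=\sum_{j\ge2}w_j(c_t+z_{j,t})=c_t+\sum_{j\ge2}w_jz_{j,t},
\]
which gives $B_w(\delta_t\oplus\pi_{\cdot,t})=\delta_t\oplus B_w(\pi_{\cdot,t})$. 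This is the only place where the adding-up constraint on the weights enters, and it is what lets the common shift pass through the average.

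Next comes the objective. For each $t\le T_0$ the discrepancy is
\[
\delta_A^2\bigl(\delta_t\oplus\pi_{1,t},B_w(\delta_t\oplus\pi_{\cdot,t})\bigr)=\Bigl\|c_t+z_{1,t}-c_t-\sum_{j\ge2}w_jz_{j,t}\Bigr\|_2^2,
\]
and $c_t$ cancels. So the objective in \eqref{eq:clr} is the same function of $w$ on all of $\Lambda$ for both data sets. Identical objective functions imply identical minimum values and identical argmin sets, hence the same $\hat w^{\mathrm{A}}$. If a selection rule is needed under non-uniqueness, it is applied to the same set.

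The treatment effect follows from the same identity. In clr coordinates the effect in a post-treatment period is $\clr(\pi_{1,t})-\clr(\hat\pi^{0,\mathrm A}_{1,t})$. Under perturbation this becomes $(c_t+z_{1,t})-(c_t+\sum_j\hat w_jz_{j,t})$, which equals the unperturbed effect because the weights coincide. I read the hypothesis as applying in every period, including post-treatment ones; the argument uses nothing beyond this.

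For non-invariance a single counterexample suffices, and producing it cleanly is the only mildly delicate step. Take $p=2$, $T_0=1$, two donors, treated share $\pi_1=(0.5,0.5)$, and donors $(0.4,0.6)$ and $(0.7,0.3)$. The raw Euclidean objective is a quadratic in the first share, so the optimal weight puts $w\cdot0.4+(1-w)\cdot0.7=0.5$, giving $w=2/3$.

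Now perturb by $\delta=\Cl(1,3)$. The first shares become $a\mapsto a/(a+3(1-a))$:
\[
0.5\mapsto 0.25,\qquad 0.4\mapsto 0.4/2.2\approx0.182,\qquad 0.7\mapsto 0.7/1.6=0.4375.
\]
Solving $0.182w+0.4375(1-w)=0.25$ gives $w\approx0.73\neq2/3$. The minimizing weights therefore change, and with them the objective values and the effects. This is because the map $a\mapsto a/(a+3(1-a))$ is nonlinear, so it does not commute with arithmetic averaging. I will verify the arithmetic explicitly, and that suffices for ``not invariant in general.''
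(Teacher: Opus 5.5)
Your invariance argument is the paper's own: $\clr(\delta_t\oplus\pi_{j,t})=\clr(\delta_t)+\clr(\pi_{j,t})$ together with $\sum_{j\ge2}w_j=1$ cancels the common term from every pre-treatment residual and from the post-treatment clr contrast. You go slightly further by noting that the whole objective function on $\Lambda$ is the same for both panels. That gives identical argmin sets even when the minimizer is not unique, a point the paper leaves implicit.

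The real difference is in the non-invariance half. The paper only observes that closure after componentwise multiplication is nonlinear and ``cannot generally be factored out'' of an arithmetic average. That is a heuristic, not a proof. Your two-share counterexample actually establishes the claim, and the arithmetic is exact. The perturbed first shares are $1/4$, $2/11$ and $7/16$, and the minimizer moves from $w=2/3$ to $w=11/15$. Both values are interior, so the constraint $w\in\Lambda$ never binds.

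Tighten one sentence, though. Both fits are exact, so the minimized objective is $0$ in both panels. What changes is the objective as a function of $w$ and its minimizer, not the minimum value. Also, your example has no post-treatment period, so the claim that ``the effects'' change is unverified. Neither point hurts the proof, because a change in the minimizing weights already refutes invariance. Claim only that.
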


\noindent\textit{Proof.} Since $\clr(\delta_t\oplus\pi_{j,t})=\clr(\delta_t)+\clr(\pi_{j,t})$ and $\sum_jw_j=1$, the common term cancels from each residual and the post-treatment contrast. For Euclidean SCM, closure after componentwise multiplication is nonlinear and depends on $\pi_{j,t}$; it cannot generally be factored out of a weighted arithmetic average. $\square$

This result does not imply that every aggregate movement is irrelevant, or that every economic shock is multiplicative. It identifies a setting in which the log-ratio specification removes a common nuisance component exactly.

\section{Simulations}\label{sec:sim}

\paragraph{Design.} I use a transparent, symmetric Monte Carlo design. Each replication has ten units, three shares, sixteen pre-treatment and ten post-treatment periods. The treated unit's latent factors are a convex combination of four donors' factors; all units also receive idiosyncratic noise. I report three DGPs. The first is additive in clr coordinates and the second is additive in raw shares. The third is an economic allocation model that is not specified in terms of either estimator: each state allocates revenue across sources according to CES shares,
\[
 \pi_{k,j,t}=\frac{\exp\{\theta_{k,j}+(1-\sigma)\log p_{k,j,t}+u_{k,j,t}\}}
 {\sum_{\ell}\exp\{\theta_{\ell,j}+(1-\sigma)\log p_{\ell,j,t}+u_{\ell,j,t}\}},
\]
where $\theta_{k,j}$ represents a state-specific propensity to rely on source $k$, $p_{k,j,t}$ is its effective fiscal price or administrative cost, and $\sigma=1.6$. Source prices have common time variation, state-specific exposure, and transitory disturbances. Thus, the shares arise from an economic allocation rule, not from an imposed SCM loss function. A common, time-varying log-ratio perturbation is either absent or applied to every unit. Counterfactual errors are calculated against the untreated treated-unit composition using both Aitchison distance and share RMSE. 

\begin{table}[t]
\caption{Monte Carlo mean counterfactual error}
\label{tab:sim}
\centering
\begin{threeparttable}
\small
\begin{tabular}{lcccc}
\toprule
& \multicolumn{2}{c}{Aitchison error} & \multicolumn{2}{c}{Share RMSE}\\
\cmidrule(lr){2-3}\cmidrule(lr){4-5}
& Aitch.\ method & Eucl.\ method & Aitch.\ method & Eucl.\ method\\
\midrule
\multicolumn{5}{l}{\emph{A. clr-factor DGP}}\\
No common perturbation & \textbf{0.162} & 0.169 & \textbf{0.0311} & 0.0322\\
Common log-ratio perturbation & \textbf{0.159} & 0.170 & \textbf{0.0301} & 0.0319\\
\addlinespace
\multicolumn{5}{l}{\emph{B. Raw-share-factor DGP}}\\
No common perturbation & 0.0548 & \textbf{0.0483} & 0.00819 & \textbf{0.00732}\\
Common log-ratio perturbation & 0.0535 & \textbf{0.0498} & 0.00805 & \textbf{0.00755}\\
\addlinespace
\multicolumn{5}{l}{\emph{C. CES allocation DGP}}\\
No common perturbation & \textbf{0.151} & 0.155 & \textbf{0.0286} & 0.0292\\
Common log-ratio perturbation & \textbf{0.153} & 0.159 & \textbf{0.0286} & 0.0298\\
\bottomrule
\end{tabular}
\begin{tablenotes}[flushleft]\footnotesize
\item \textit{Note.} 200 replications per cell. Errors are measured against the true untreated composition in the ten post-treatment periods. Best entry in each pair is in bold. The exact invariance result in Proposition \ref{prop:cancel} concerns a given panel; small differences across the two shock rows reflect independent Monte Carlo draws.
\end{tablenotes}
\end{threeparttable}
\end{table}

\paragraph{Results.} Table \ref{tab:sim} supplies an important qualification. Under the clr-factor DGP, clr SCM has lower error under both scores; under the raw-share DGP, raw-share SCM performs better under both scores. The CES allocation model is the more economically interpretable comparison: clr SCM again has lower error under both scores, despite the DGP not being written as an SCM factor model. This does not establish universal dominance. The CES formula implies that relative shares respond to relative propensities and effective prices, so it naturally gives content to log-ratio comparisons. 

\section{Application: New York's Gap Elimination Adjustment }\label{sec:app}

The outcome is the composition of public K--12 revenue across federal, state, and local
sources for 50 states from 1992 to 2016, from the U.S.\ Census Bureau's Annual Survey of
School System Finances.

In the enacted budget for 2010--11, New York introduced the Gap Elimination Adjustment, a
formula-based reduction applied to each district's state school aid, which cut aid by
roughly \$1.4 billion in its first year and was not fully restored until 2017. I set $T_0=2009$, giving eighteen pre-treatment and seven post-treatment years, with all 49 remaining states as donors.

\begin{table}[t]
\caption{New York, Gap Elimination Adjustment}
\label{tab:ny}
\centering
\begin{threeparttable}
\footnotesize
\setlength{\tabcolsep}{3pt}
\begin{tabular}{lcc}
\toprule
& Aitchison method & Euclidean method\\
\midrule
Pre-treatment Aitchison RMSPE & \textbf{0.051} & 0.074\\
Pre-treatment share RMSPE & 0.0099 & \textbf{0.0084}\\
\quad Federal: RMSPE (pp) / mean $|\!\log$ ratio$|$ & 0.24 / 0.033 & 0.48 / 0.059\\
\quad State & 1.26 / 0.021 & 1.02 / 0.018\\
\quad Local & 1.14 / 0.016 & 0.91 / 0.014\\
\addlinespace
Donors with weight & 14 (max Virginia 0.27) & 11 (max Ohio 0.54)\\
Rank among 50 states & \textbf{2} & 7\\
Permutation $p$-value & \textbf{0.040} & 0.140\\
\addlinespace
\multicolumn{3}{l}{\emph{Share effects} (pp): Federal / State / Local}\\
\quad 2011 & $-0.77$ / $-2.44$ / $+3.21$ & $-1.42$ / $-3.44$ / $+4.86$\\
\quad 2013 & $-1.37$ / $-4.07$ / $+5.44$ & $-2.18$ / $-3.16$ / $+5.34$\\
\quad 2016 & $-1.37$ / $-2.99$ / $+4.37$ & $-2.10$ / $-1.41$ / $+3.52$\\
\bottomrule
\end{tabular}
\begin{tablenotes}[flushleft]\footnotesize
\item \textit{Note.} 50 states, eighteen pre-treatment and seven post-treatment years.
$p$-values are exact one-sided permutation values under the label-symmetric rule, so the
smallest attainable value is $1/50=0.02$. Effects sum to zero across categories by
construction.
\end{tablenotes}
\end{threeparttable}
\end{table}

\begin{figure}[t]
\centering
\includegraphics[width=\textwidth]{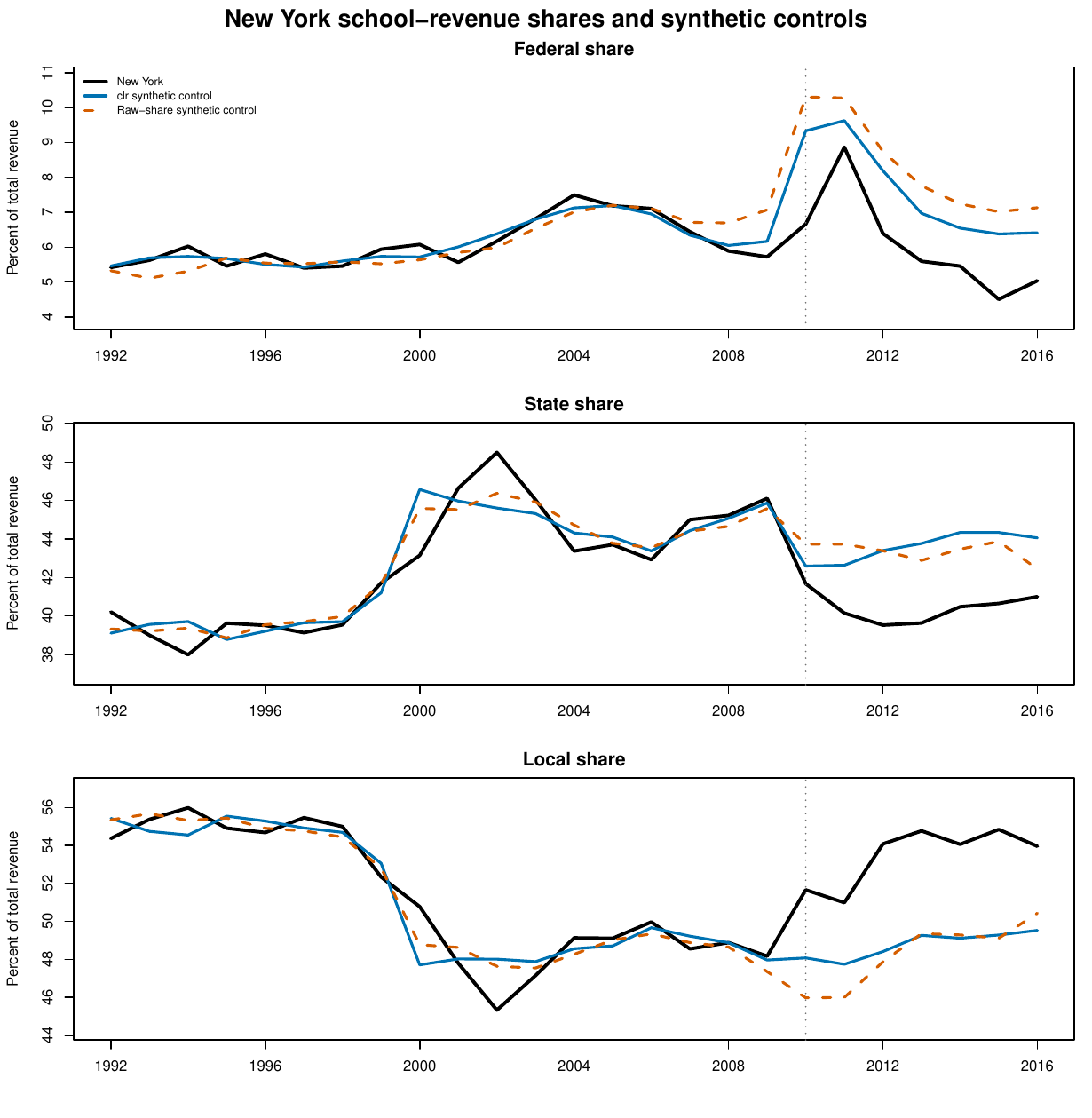}
\caption{New York's revenue shares and the two synthetic controls. The vertical line marks the 2010--11 Gap Elimination Adjustment. clr SCM is shown in blue and raw-share SCM in orange.}
\label{fig:nyshares}
\end{figure}

Both methods recover the sign the policy implies: the state share falls relative to its
synthetic counterpart and the local share rises by a corresponding amount, with the federal
share moving little (Table \ref{tab:ny}). Each fits the pre-treatment share moving little (Table \ref{tab:ny}; Figure \ref{fig:nyshares}).

The discrete-choice interpretation sharpens the economic distinction. If a revenue source is
viewed as the outcome of an aggregate financing-choice rule, then the clr SCM selects states whose relative reliance on federal, state,
and local finance evolves similarly to New York's; its weights summarize similarity in the
underlying financing incentives or institutional preferences, rather than similarity in each
share considered in isolation. The Gap Elimination Adjustment is then a negative shift in the
relative index for state aid, compensated primarily by local finance. This is an interpretation,
not a structural estimate of household or policymaker preferences: such a claim would require
micro-level choice data and an explicit institutional model.

They nonetheless reach different conclusions. The Euclidean method places $0.54$ of its
weight on Ohio alone, while the Aitchison method spreads weight across fourteen donors led
by Virginia at $0.27$; the two weight vectors differ by $0.418$ in sup norm. New York's state and local shares are both near $46\%$ and dominate a squared-difference objective, so matching them well is close to matching a single series, whereas the log-ratio objective must also reproduce the $7\%$
federal share. The consequence is inferential. New York's post-to-pre ratio ranks second of
fifty units under the Aitchison method, giving an exact permutation $p$-value of $0.040$,
and seventh under the Euclidean method, giving $0.140$. One analyst rejects the null of no
compositional effect at the five percent level and the other does not, from the same data,
the same treated unit, the same donor pool and the same test. Across the raw data and eight
common perturbations, the Aitchison $p$-value is $0.040$ every time while the Euclidean one
ranges from $0.14$ to $0.22$.

The Adjustment was a response to the post-2008 fiscal shortfall, which affected
every state, so part of the movement is common rather than specific to New York; the donor
pool absorbs common movements, exactly so in clr coordinates by Proposition
\ref{prop:cancel}, but the cut's size and formula were New York's own and the timing of
restoration was discretionary. And seven post-treatment years is a short window. I present
this case to show that the divergence documented in Section \ref{sec:why} reaches the
conclusions a researcher would report, not as a definitive evaluation of the policy.

\section{Conclusion}\label{sec:concl}

For compositional outcomes, the synthetic control should be built in clr coordinates. Some Properties separate the methods. Information lives in ratios, so the Euclidean method tilts toward the largest category when selecting donors and inflates high-dispersion categories when averaging them. Common shocks are multiplicative perturbations under the discrete-choice representation, and they cancel
from the Aitchison method at both steps and from neither step of the Euclidean one. The simulations show the advantage for economic interpretation. The implementation cost is only one line.

\end{document}